\documentclass[a4paper,USenglish,cleveref, autoref, thm-restate]{lipics-v2021}

\pdfoutput=1 %
\hideLIPIcs  %

\nolinenumbers

\usepackage{csquotes}
\usepackage{amsmath}
\usepackage{mathtools}
\usepackage{multirow}
\usepackage{nicematrix}
\usepackage{todonotes}
\usepackage{mdframed}
\usepackage{bm}
\usepackage{todonotes}
\usepackage{cite}
\usepackage{color, colortbl} %
\usepackage{xspace}
\usepackage{nicefrac}
\usepackage{booktabs} %
\usepackage{relsize}
\usepackage{siunitx}
\usepackage{bbold}
\usepackage{pgfplots}
\usepackage{enumitem}
\usepackage{placeins}

\setlist{topsep=0mm}
\Crefformat{figure}{#2Fig.~#1#3}
\Crefmultiformat{figure}{Figs.~#2#1#3}{ and~#2#1#3}{, #2#1#3}{ and~#2#1#3}

\definecolor{lightGray}{gray}{0.9}

\newmdenv[
  topline=false,
  bottomline=false,
  rightline=false,
  skipabove=4pt,
  skipbelow=1pt,
  innerleftmargin=3pt,
  innerrightmargin=0pt,
  innertopmargin=1pt,
  innerbottommargin=1pt,
  linecolor=orange
]{siderules}

\newcommand{\Rho}{R}

\newcommand{\PP}{\mathcal{P}}

\newcommand{\myparagraph}[1]{\medskip\noindent\textsf{\textbf{#1}}}
\newcommand{\mysubparagraph}[1]{\smallskip\noindent\textsf{\textbf{#1}}}

\newcommand{\flowvar}[1]{\ensuremath{f^{uv}_{(#1)}}\xspace}
\newcommand{\flowvarij}{\flowvar{i,j}}
\newcommand{\flowvarji}{\flowvar{j,i}}

\newcommand{\ABplus}{AB\textsuperscript{+}}
\newcommand{\kSP}[1][k]{\boldmath$#1$-SP}
\newcommand{\BiAstar}[1]{$#1$-BiA$^*$}

\newcommand{\weightSet}{\mathcal{W}}
\newcommand{\algo}[2][]{\textsf{#2}\textsubscript{\textbf{\textsf{#1}}}\xspace}
\newcommand{\instance}[1]{\textsf{#1}\xspace}
\newcommand{\problem}[1]{\textsf{#1}\xspace}
\newcommand{\weight}[1]{\ensuremath{\mathrm{W_{#1}}}\xspace}

\newcommand{\Hyp}[1]{Hypothesis \hypShort{#1}}
\newcommand{\hyp}[1]{hypothesis \hypShort{#1}}
\newcommand{\hypShort}[1]{$\mathcal{H}#1$\xspace}
\newcommand{\tableHead}{
    \hline
    weight &  \multicolumn{6}{c|}{\weight{n}} &  \multicolumn{6}{c|}{\weight{euc}} &  \multicolumn{4}{c|}{\weight{1} (=unweighted)}\\
    \hline
    density &  \multicolumn{3}{c|}{any $\delta\in D$} &  \multicolumn{3}{c|}{any $\varrho\in \Rho$} &  \multicolumn{3}{c|}{any $\delta\in D$} &  \multicolumn{3}{c|}{any $\varrho\in \Rho$} & \multicolumn{2}{c|}{any $\delta\in D$} &  \multicolumn{2}{c|}{any $\varrho\in \Rho$}  \\
    \hline
    $\alpha$ & $1.2$ & $2$ & $3$ & $1.2$ & $2$ & $3$ & $1.2$ & $2$ & $3$ & $1.2$ & $2$ & $3$ & $2$ & $3$ & $2$ & $3$ \\
}

\makeatletter
\newenvironment{taggedsubequations}[1]
 {\addtocounter{equation}{-1}%
  \begin{subequations}%
  \def\@currentlabel{#1}%
 }
 {\end{subequations}}
\makeatother

\bibliographystyle{plainurl}%

\title{Exact Minimum Weight Spanners via Column Generation} 

\titlerunning{Exact \problem{MWSP} via Column Generation} 

\author{Fritz Bökler}
{Institute of Computer Science, Osnabrück University, Germany}
{fboekler@uos.de}
{https://orcid.org/0000-0002-7950-6965}
{}

\author{Markus Chimani}
{Institute of Computer Science, Osnabrück University, Germany}
{markus.chimani@uos.de}
{https://orcid.org/0000-0002-4681-5550}
{}

\author{Henning {Jasper}\footnote{Corresponding author}}
{Institute of Computer Science, Osnabrück University, Germany}
{henning.jasper@uos.de}
{https://orcid.org/0000-0002-9821-8600}
{}%

\author{Mirko H. {Wagner}}
{Institute of Computer Science, Osnabrück University, Germany}
{mirko.wagner@uos.de}
{https://orcid.org/0000-0003-4593-8740}
{}

\authorrunning{F. Bökler, M. Chimani, H. Jasper and M. H. Wagner}  %

\Copyright{Fritz Bökler, Markus Chimani, Henning Jasper and Mirko H. Wagner} %

\ccsdesc{Theory of computation $\to$ Mathematical optimization, Network optimization}

\keywords{Graph spanners, ILP, algorithm engineering, experimental study} %

\category{} %

\relatedversion{} %

\EventEditors{--Editor Names--}
\EventNoEds{2}
\EventLongTitle{Submitted to ESA 2024}
\EventShortTitle{Subm.~to~ESA~24}
\EventAcronym{---}
\EventYear{2016}
\EventDate{Month Day--Day, 2024}
\EventLocation{City, Country}
\EventLogo{}
\SeriesVolume{X}
\ArticleNo{X}

\begin{document}

\maketitle

\begin{abstract}
Given a weighted graph $G$, a minimum weight $\alpha$-spanner is a least-weight subgraph $H\subseteq G$ that preserves minimum distances between all node pairs up to a factor of $\alpha$. There are many results on heuristics and approximation algorithms, including a recent investigation of their practical performance~\cite{chimani2022}. Exact approaches, in contrast, have long been denounced as impractical:
The first exact ILP (integer linear program) method~\cite{Sigurd2004} from 2004 is based on a model with exponentially many path variables, solved via column generation. A second approach~\cite{ahmed2019}, modeling via arc-based multicommodity flow, was presented in 2019. In both cases, only graphs with 40--100 nodes were reported to be solvable.

In this paper, we briefly report on a theoretical comparison between these two models from a polyhedral point of view, and then concentrate on improvements and engineering aspects. We evaluate their performance in a large-scale empirical study. We report that our tuned column generation approach, based on
multicriteria shortest path computations, is able to solve instances with over 16\,000 nodes within 13\,min.
Furthermore, now knowing optimal solutions for larger graphs, we are able to investigate the quality of the strongest known heuristic on reasonably sized instances for the first time.
\end{abstract}

\newpage
\setcounter{page}{1}
\section{Introduction} \label{sec:intro}
Let $G=(V,E)$ be an undirected graph with $n$ nodes and $m$ edges.
The distance $d_G(u, v)$ is the length of a shortest path between two nodes $u$ and $v$ in $G$, possibly subject to positive edge weights $w_e > 0$ for all $e \in E$.
A \emph{spanner} is a subgraph $H\subseteq G$ that preserves these distances within some quality degree; see Ahmed et.\ al \cite{ahmed2020} for an overview on several variants.
In this paper, we consider the original and most prominent variant of a \emph{multiplicative $\alpha$-spanner} (in the following just \emph{($\alpha$-)spanner} for simplicity):
for a given \emph{stretch factor} $\alpha\geq 1$, we require that the \emph{stretch constraint}
$d_H(u,v) \leq \alpha \cdot d_G(u,v)$ holds for all node pairs $\{u,v\} \in \binom{V}{2}$.
The \emph{minimum weight spanner problem (\problem{MWSP})} is thus to find such a spanner of minimum total weight.
For uniform edge weights, i.e., $w_e =1$ for all $e\in E$, \problem{MWSP} is equivalent to finding a spanner of minimum size $|E(H)|$.
Spanners were first introduced %
in the context of synchronization in distributed systems and communication networks \cite{peleg1989, Ullman1989}.
Their efficient computation is a highly relevant topic in many applications, e.g., routing problems~\cite{shpungin2010}, graph drawings~\cite{wallinger2023}, access control hierarchies~\cite{jha2013, bhattacharyya2012}, or passenger assignment~\cite{heinrich2023}.

\problem{MWSP} is known to be NP-hard~\cite{CAI1994187}.
Thus, most published algorithms are heuristics or approximations.
However, their guarantees often primarily approximate, e.g., the ratio between the weight of the spanner and that of a minimum spanning tree (the so-called \emph{lightness})---see~\cite{ahmed2020,chimani2022} for an overview.
One of the earliest \problem{MWSP}-algorithms
is the \emph{Basic Greedy} (\algo{BG}) algorithm by Althöfer et al.~\cite{althofer1993}.
Several algorithms were developed in an attempt to improve over \algo{BG}~\cite{roditty2011,baswana2007,berman2013,elkin2017}; some of them allow the spanners to violate the stretch constraint by a factor of $1 + \varepsilon$~\cite{chandra1992, elkin2015, elkin2016, alstrup2022}.
However, \algo{BG} still is beneficial w.r.t.\ most guarantees and has been proven to be \emph{existentially optimal}~\cite{ahmed2020}; also, most of the newer algorithms lead to very complex, non-practical implementations.
Recently,~\cite{chimani2022} investigated the practical performance of the most promising of those approaches. They conclude that in almost all cases, \algo{BG} provides the sparsest and lightest spanners, typically even within the shortest running time.

There are exact algorithms for some special cases of \problem{MWSP}.
Cai and Keil~\cite{CaiKeil1994} present a linear time algorithm for minimum $2$-spanners in unweighted graphs with maximum degree at most four. 
Kobayashi~\cite{kobayashi2018} gives an FPT algorithm for unweighted \problem{MWSP} that is parameterized in the number of edges that need to be removed to yield $H$. 

For general \problem{MWSP}, however, there are currently only two published exact algorithms, both solving an integer linear program (ILP):
The first algorithm was proposed in 2004 by Sigurd and Zachariasen~\cite{Sigurd2004} and requires column generation:
They use a \emph{path-based} ILP formulation containing an exponential number of path variables, which are incrementally introduced by solving the \emph{pricing problem}: a particular kind of the \emph{(Resource) Constrained Shortest Path} problem (\problem{CSP})~\cite{pugliese2013,festa2015}.
Even in the case of only two resources (such that one is constrained, while the other is minimized), \problem{CSP} is NP-hard~\cite{Garey1979}, but can often be solved effectively~\cite{pugliese2013,festa2015}.
\problem{CSP} is regularly used as a building block within column generation, e.g., in vehicle routing \cite{baldacci2012, zhu2012}, aircraft flight assignment \cite{barnhart1998}, and crew scheduling \cite{mingozzi1999}. 
The approach of~\cite{Sigurd2004} was tested on graphs with up to 64 nodes, but not every instance could be solved within a time limit of 30 minutes.

The second exact approach was proposed by Ahmed et al.~\cite{ahmed2019} in 2019. Their model uses an \emph{arc-based} multicommodity flow formulation
and has polynomial size.
While not directly comparing their approach to~\cite{Sigurd2004}, they tested their formulation on graphs with up to 100 nodes, %
on which their solver needed up to 40 hours.

\myparagraph{Contribution.} We compare the known exact ILP approaches for the spanner problem for the first time, and improve on them. Our goal is to show that, despite the results suggested in literature, exact approaches for the spanner problem are in fact a worthwhile endeavor in practice.
On the theory side, we investigate their relative polyhedral strength. From the practical point of view, the arc-based approach is relatively straight-forward to implement, but the path-based approach turns out to be much more interesting and fruitful w.r.t.\ boosting its performance: we propose several improvements by means of size reduction, new initialization strategies, and stronger pricing algorithms, facilitating concepts from multiobjective optimization.
Our modifications allow us to solve instances orders of magnitudes larger than before, e.g., road networks with over 16\,000 nodes within 13 minutes. Our path-based column generation approach is significantly faster and often even yields smaller models than the polynomially-sized (and further tuned) arc-based model.
Lastly, our results allow us to further evaluate the quality of the strongest known spanner heuristic \algo{BG}~\cite{althofer1993}.
In contrast to previous works, we can now investigate the quality on reasonably-sized instances w.r.t.\ optimal objective values.

\section{Original Column Generation Approach} \label{sec:ColGen}
We first summarize the column generation approach for \problem{MWSP}~\cite{Sigurd2004}, and discuss our improvements later in \Cref{sec:AE}. The set of \emph{terminal pairs} $K$ contains all node pairs for which the stretch constraint is enforced.
In~\cite{Sigurd2004}, they use $K=V \times V$ consisting of all \emph{ordered} node pairs. Herein, we prefer the sufficient $K=\binom{V}{2}$ of unordered pairs, to avoid redundancy. 

The key concept of the model is to establish a binary \emph{path variable} $y_P$ for each
path $P$ in $G$, which is $1$ if and only if $P$ is contained in the solution $H$ and at the same time is used to witness that its endpoints $u,v$ satisfy the stretch constraint $d_H(u,v) \leq \alpha \cdot d_G(u,v)$. For all $\{u,v\}\in K$, let $\PP_{uv}$ denote the set of all $u$-$v$-paths that are no longer than $\alpha \cdot d_G(u,v)$, and let $\PP= \bigcup_{\{u, v\} \in K} \PP_{uv}$. 
We can write the ILP model \eqref{ilp:PB}, where decision variables $x_e$ establish the solution $H$:
\begin{taggedsubequations}{PB}\label{ilp:PB}
\begin{align}
\mathrlap{\text{(PB)}}&&
   \min  \sum_{e \in E} w_{e} x_{e}  &  &  &  \\
   &&\text{s.t.\ }\sum_{P \in \PP_{uv}}  y_P   & \geq 1       &  & \forall \{u, v\} \in K                  \label{PB:path}\\
   &&\sum_{P \in \PP_{uv} : e\in P} y_P & \leq x_e     &  & \forall e \in E,\, \forall \{u, v\} \in K \label{PB:edge}\\
   &&x_{e}, y_{P}                      & \in  \{0,1\} &  & \forall e \in E,\, \forall P \in \PP
\end{align}%
\end{taggedsubequations}%

To solve this ILP via branch-and-bound (B\&B), we need to solve its LP relaxation (i.e., the binary requirements are relaxed to the interval $[0,1]$) at each B\&B node.
While there are only $(|E|+1)|K|$ constraints, the exponential number of 
path variables constitutes the main challenge, which is solved using \emph{column generation}---see, e.g., \cite{conforti2014, wolsey2020} for introductions into exact branch-and-price algorithms.
We consider the \emph{restricted master problem} (RMP), which considers all edge variables but only a subset $\PP'=\bigcup_{\{u, v\} \in K} \PP'_{uv}$ of the path variables, where
$\emptyset\neq \PP'_{uv} \subseteq \PP_{uv}$. %
The original publication~\cite{Sigurd2004} does not describe the initialization of the individual sets $P'_{uv}$.
However, since singletons suffice and there is no heuristic used to yield upper bounds (and corresponding candidate paths), we assume the natural initialization that $P'_{uv}$ consists of a single shortest $u$-$v$-path in $G$, for each $\{u,v\} \in K$. We denote this initialization strategy \algo{\kSP[1]} in the following.

During the solving process, paths $P \in \PP\setminus \PP'$ are iteratively added to $\PP'$, extending the RMP.
This procedure is known as column generation.
Consider the dual LP of the relaxation of the full primal model~\eqref{ilp:PB}, where $\sigma_{uv}$ and $\pi^{uv}_e$ are the non-negative dual variables corresponding to the primal constraints \eqref{PB:path} and \eqref{PB:edge} (after canonicalization), respectively. We are mainly interested in the dual constraints
\begin{align}
\sum_{e \in P} \pi^{uv}_e & \leq \sigma_{uv} && \forall P \in \PP_{uv}, \forall \{u, v\} \in K.
\tag{PB-D}\label{eq:dual}
\end{align}
They are the only constraints that could be violated if we do not use a sufficiently large subset $\PP'$ when finding a dual solution to the RMP.
Consider a pair of optimal primal and dual solutions for the RMP. If the dual solution satisfies all constraints \eqref{eq:dual}, then by the weak duality theorem, the primal solution is optimal for the full LP relaxation of the ILP model~(\ref{ilp:PB}), instead of only for the RMP, and we can stop the column generation routine.

Otherwise, the dual solution violates at least one constraint \eqref{eq:dual} for some path $P \in \PP_{uv} \setminus \PP'$ and $\{u,v\} \in K$.
Our goal is thus to find such a $P$ with negative \emph{reduced cost} $r^\pi_\sigma(P)\coloneqq \sigma_{uv} - \sum_{e \in P} \pi^{uv}_e$, and add its path variable $y_P$ to $\PP'$.
This is called the \emph{pricing problem}. Typically, we ask for a path $P$ with smallest possible reduced cost.
We solve the pricing problem separately for every $\{u,v\} \in K$. We ask for a shortest $u$-$v$-path $P$ in $G$ w.r.t.\ edge costs $\pi^{uv}_e$, while ensuring that $P\in \PP_{uv}$.
Thus, we have a \problem{CSP} (constrained shortest path) problem, since the latter property requires us to also consider the original edge weights $w_e$ (distinct from the edge costs $\pi^{uv}_e$) and requiring that the identified path has a total weight of at most $B_{uv}=\alpha \cdot d_G(u,v)$.
Also, we are only interested in paths of total cost strictly less than $\sigma_{uv}$, as those yield negative reduced costs.

\mysubparagraph{\algo{BasicCSP}.} In \cite{Sigurd2004}, a very basic \algo{CSP} algorithm similar to~\cite{desrochers1988} is proposed, with the addition of early discarding of infeasible paths: A \emph{label} is a tuple storing the cost and weight of a path. At each node of $G$, we store a list of labels (initially $\emptyset$) sorted by cost. We start with label $(0,0)$ at $u$, and always proceed with the lowest-cost label: we propagate the label to adjacent nodes (increasing the cost and weight according to the cost and weight of the traversed edge).
We discard labels that are dominated (i.e., there is another label element-wise smaller or equal) or guaranteed to exceed the cost or weight limit. This unidirectional search stops once $v$ is reached, yielding a single path $P\in \PP_{uv}$ to add to $\PP'$.
We call this algorithm \algo{BasicCSP}.

\mysubparagraph{Minor B\&B considerations.}
To speed up the computation,~\cite{Sigurd2004} only 
adds constraints \eqref{PB:edge} to the RMP once at least one corresponding path variable exists in the RMP.
Further, whenever an edge variable $x_e$ is fixed to $0$ during branching, no further path containing $e$ needs to be considered in this subproblem. By locally setting 
the cost of $e$ to $\sigma_{uv}$, such paths will be automatically pruned by the \problem{CSP} computation for $\{u,v\}\in K$.

\section{Speedup Techniques and Algorithm Engineering} \label{sec:AE}
We are now ready to discuss our techniques to speed up the above approach. %

\mysubparagraph{General: Graph representation.}
All our shortest path implementations (\algo{\kSP[1]} and \algo{BasicCSP} above, and \algo{\kSP} and \algo{\BiAstar{\mu}} below)
use an array-based forward-star
representation~\cite{ebert1987} of the bidirected graph\footnote{Directed graphs are represented by an array of subarrays (one subarray per node); each subarray stores the outgoing edges for the respective node. Undirected graphs are encoded as bidirected graphs.}, which is especially suited for cache-efficient shortest path computations. %

\mysubparagraph{Preprocessing: Metrication.}
A minimum weight spanner will never include an edge $\{i,j\} \in E$ with $w_{\{i,j\}}>d_G(i,j)$. %
Thus, we can safely remove all such edges from $G$.

\mysubparagraph{Model size and number of \problem{CSP} calls: Terminal pairs.}
To observe the stretch constraint for every node pair in an undirected graph, it is natural to set $K=\binom{V}{2}$. 
However, it is long known that ensuring the stretch constraint for all \emph{adjacent} node pairs already guarantees feasible $\alpha$-spanners~\cite{peleg1989}. Thus, $K=E$ suffices. 
For sparse graphs, this decreases $K$'s size by a linear factor and, most importantly, also speeds up each pricing step, since we need to run a \problem{CSP} algorithm for each node pair in $K$.

\mysubparagraph{Initialization: Variable sets.}
Let $H'$ be an $\alpha$-spanner computed by a heuristic.
For each $\{u,v\} \in K$, we can initialize $P'_{uv}$ by a shortest $u$-$v$-path in $H'$.
Clearly, this provides a feasible initial set of paths.
Also, $H'$ yields an upper bound for the B\&B computation.
Comparing several theoretically strong algorithms, \cite{chimani2022} identify \algo{BG}~\cite{althofer1993} as the by far practically strongest approach. The runner-ups Baswana and Sen~\cite{baswana2007} and Berman et al.~\cite{berman2013} are only worthwhile in certain scenarios. 
Using the implementations of~\cite{chimani2022}, pilot studies suggest that the latter two are clearly inferior to \algo{BG} w.r.t.\ the initialization of the RMP.

Moreover, we generalize the \algo{\kSP[1]} initialization (used in~\cite{Sigurd2004}) such that multiple $u$-$v$-paths can be included:
For all $\{u,v\} \in K$, our \algo{\kSP[k]} initialization shall compute the $k$-shortest $u$-$v$-paths that are no longer than $\alpha \cdot d_G(u,v)$.
We implement a $k$-shortest path A$^{*}$ algorithm similar to~\cite{liu2001}.
The goal-directing distance heuristics are also used to discard paths early that are guaranteed to violate the stretch constraints.
Finding paths with \algo{\kSP[k]} is more efficient than with \problem{CSP} algorithms, but we cannot be certain that added paths locally improve the solution value of the RMP.
Note that \algo{\kSP[k]} can be modified to provide \emph{all} $u$-$v$-paths that are no longer than $\alpha \cdot d_G(u,v)$, i.e., the entire set $P_{uv}$. We call this the \emph{brute force} initialization.
We compared combinations of \algo{BG} with \algo{\kSP[k]} for $k \in \{5, 10, 20, 50\}$, and brute force to compute initial paths. Pilot studies show that \algo{\kSP[10]+BG} is the best allrounder.

\mysubparagraph{Initialization: Fixing variables.}
If there is only a single $u$-$v$-path $P$ that is no longer than $\alpha \cdot d_G(u,v)$, for some $\{u,v\} \in K$, we can fix its corresponding  path variable $y_P$ (and associated edge variables) to $1$.
Such paths are detected if a \algo{\kSP[2]} algorithm yields only a single path.
Thus, solvers using a \algo{\kSP[k]} initialization with $k>1$, can fix these variables with no overhead.

\mysubparagraph{Pricing: \algo{\BiAstar{\bm{\mu}}}.}
In the pricing step, we ask for a feasible (in terms of the stretch constraint) $u$-$v$-path $P$ with negative reduced cost, %
for each $\{u,v\}\in K$.
Formally, any such path suffices to locally improve the current solution; a path with minimum cost yields the steepest descent w.r.t.\ the objective function.
However, thinking about the pivot operation in the simplex algorithm, we know that another path with slightly higher cost may yield an overall better solution as it may allow the corresponding primal variable to be set to a higher value. Thus, we are interested in  generalizing the pricing problem from \problem{CSP} to what we call \problem{$\mu$-CSP}, for $1 \leq \mu \in \mathbb{N}$: the goal is to find a feasible path for each of the $\mu$ smallest among the negative reduced cost values, as long as they exist.
The standard \problem{CSP} is thus identical to \problem{$1$-CSP}.

For \problem{$1$-CSP}, we can use the bidirectional A$^{*}$ (\algo{BiA$^{*}$}) label-setting algorithm, as presented by Thomas et al.~\cite{thomas2019}, instead of \algo{BasicCSP}. We chiefly summarize its main ideas:
We simultaneously conduct a forward search from $u$, and a backward search from $v\in V$. By computing traditional single-source shortest paths (both from $u$ and from $v$) individually w.r.t.\ only costs or only weights, we obtain feasible lower bounds for early termination (similar to \algo{BasicCSP}) and the goal direction.
Label dominations can then be considered w.r.t.\ the induced lower bound estimations of the full $u$-$v$-paths.
Unprocessed non-dominated labels are held in a min-heap, using a lexicographic comparison on this estimated cost and weight (in this order).
Whenever a label is processed, we additionally try to join it with a suitable label from the other search direction at the respective node.
The algorithm has identified a minimum-cost feasible path (if it exists) once a joined label is processed for the first time.

We generalize this \problem{$1$-CSP} \algo{BiA$^{*}$} algorithm to our scenario for general \problem{$\mu$-CSP}, and call this variant \algo{\BiAstar{\mu}}. Thereby, we also integrate ideas from multiobjective optimization, in particular, the lexicographic-order-based label-setting algorithm of~\cite{martins1984}. The core insight is that by allowing the algorithm to proceed after its standard termination criterion, we obtain a sequence of feasible pair-wise non-dominating $u$-$v$-paths with increasing reduced cost, and may stop only after extracting $\mu$ such paths (or once the label heap is emptied).

The set of all non-dominated labels of feasible paths is called \emph{Pareto front}.
Our \algo{\BiAstar{\mu}} returns a feasible path corresponding to each of the $\mu$ lowest-cost labels in the Pareto front.
For $\mu=\infty$, the entire Pareto front is represented.
However, adding too many paths in a single \problem{$\mu$-CSP} call
may overcrowd the LP with unnecessary variables. 
Furthermore, even if fewer than $\mu$ feasible paths exist, we still have to wait for the heap to be emptied, as deciding if all feasible paths are found is NP-hard~\cite{bokler2017}.
In pilot studies, we compared solvers using \algo{\BiAstar{\mu}}, for $\mu \in \{1,2,3,5,\infty\}$. 
Generally, using $\mu=\infty$ performs worst, while the other $\mu$-values yield very similar performances and there is no value that performs best across all instances. Overall, however, \algo{\BiAstar{3}} appears to be the most promising approach.

\mysubparagraph{Pricing: Pruning \problem{$\bm{\mu}$-CSP} calls.}
During pricing, we generally solve a \problem{$\mu$-CSP} instance for each terminal pair $\{u,v\}\in K$.
This instance is characterized by the edge costs $\pi^{uv}$ and cost limit $\sigma_{uv}$. The vector $\pi^{uv}$ contains the non-zero dual values for existing constraints \eqref{PB:edge}, and $0$ otherwise.
For each $\{u,v\} \in K$, we may store the last $(\pi^{uv},\sigma_{uv})$ that was considered without yielding a feasible path.
Subsequently, for this $\{u,v\}$, we only need to compute the current \problem{$\mu$-CSP} instance, if its new cost limit is larger or some new edge cost is lower than the old stored value. 
Otherwise, we can \emph{prune} the call without needing to perform any computation.
Clearly, the required bookkeeping is very memory consuming, but pilot studies show that it drastically reduces the number of \problem{$\mu$-CSP} calls.
Typically, only a few terminal pairs require many pricing iterations to find all required paths.

\section{Multicommodity Flow}\label{sec:AB}
The above path-based formulation \eqref{ilp:PB} can alternatively be understood to model a multicommodity flow problem (MCF, for short) with one commodity for each $\{u,v\}\in K$. A standard way to model MCF is an arc-based formulation, which is the basis for the only other ILP formulation \eqref{ilp:AB} for \problem{MWSP}~\cite{ahmed2019}.
After its short description and some straight-forward speed-up techniques, we compare \eqref{ilp:AB} and \eqref{ilp:PB}.

For each terminal pair $\{u,v\} \in K$, we route a unit flow from $u$ to $v$,
modeled via two directed \emph{flow variables} $\flowvarij$ and \flowvarji, for each $\{i,j\} \in E$, and corresponding flow conservation constraints~\eqref{ilp:AB:kirch}.
The spanner $H$ is again described by decision variables $x_e$, for all $e\in E$, which have to be $1$ if the edge carries some flow \eqref{ilp:AB:edge_arc}.
Constraints \eqref{ilp:AB:stretch} ensure that any integral $u$-$v$-flow has length at most $\alpha \cdot d_G(u, v)$. We use $\mathbb{1}_{\varphi}$ as an indicator function that is $1$ if the boolean expression $\varphi$ is true, and $0$ otherwise.
\begin{taggedsubequations}{AB}
\label{ilp:AB}
\allowdisplaybreaks
\begin{align}
\mathrlap{\text{(AB)}}&&
\text{min}  \sum_{e \in E} w_{e} x_{e} & \\
&&        \text{s.t.\ }     \sum_{j:\{i,j\} \in E} (\flowvarij - \flowvarji) &= \mathbb{1}_{i=u} - \mathbb{1}_{i=v}
                            &&  \forall \{u, v\} \in K, \forall i \in V \label{ilp:AB:kirch}\\
&&             \flowvarij + \flowvarji &\leq x_e && \forall \{u,v\} \in K, \forall e=\{i,j\} \in E \label{ilp:AB:edge_arc}\\
&&\sum_{\{i,j\} \in E} w_{\{i,j\}} (\flowvarij + \flowvarji) &\leq \alpha \cdot d_G(u,v) && \forall \{u,v\} \in K \label{ilp:AB:stretch}\\
&&             \sum_{j:\{i,j\} \in E} \flowvarij &\leq \mathbb{1}_{i \neq v} &&\forall \{u,v\} \in K, \forall i \in V
\label{ilp:AB:outflow}\\
&&              x_{e},\flowvarij, \flowvarji &\in  \{0,1\} &&  \forall \{u,v\} \in K, \forall e=\{i,j\} \in E
\end{align}%
\end{taggedsubequations}%
The non-essential constraints \eqref{ilp:AB:outflow}, with a right-hand side of $1$, are proposed in~\cite{ahmed2019} so that flow would correspond to simple paths; we use the slightly stronger right-hand side $\mathbb{1}_{i \neq v}$. %

In~\cite{ahmed2019}, several size reduction techniques are used: Firstly, they also use metrication.
Secondly, they fix \emph{unreachable} variables:
whenever, for any $\{u,v\} \in K$ and any edge direction $(i,j)$, $\{i,j\}\in E$, $d_G(u,i) + w_{\{i,j\}} + d_G(j,v) > \alpha \cdot d_G(u,v)$, they set $\flowvarij=0$.
Thirdly, they fix \emph{mandatory} variables:
if, for some $\{u,v\} \in K$ and $\{i,j\} \in E$, the edge direction $(i,j)$ is used in every  $u$-$v$-path observing the stretch constraint, then the corresponding flow variable \flowvarij (and consequently $x_{\{i,j\}}$) is fixed to~$1$.

\mysubparagraph{Our modifications.} We reduce the size of the \eqref{ilp:AB} model by using $K=E$, instead of $K=\binom{V}{2}$, analogous to the discussion in \Cref{sec:AE}.
We also provide the solver with an upper bound from the \algo{BG} heuristic.
Pilot studies show that %
our modified solver yields significantly smaller models and is able to solve larger instances than the original \eqref{ilp:AB} implementation.

\mysubparagraph{Comparing \eqref{ilp:AB} and \eqref{ilp:PB}.} In contrast to \eqref{ilp:PB}, \eqref{ilp:AB} has polynomial size and can be solved using standard B\&B frameworks. Thus, \eqref{ilp:AB} is considerably easier to implement. 
Arc-based MCF ILPs can be formulated for a wide range of problems~\cite{salimifard2022}; in practice, however, such models are often inferior to (typically also implementation-wise) more involved formulations, e.g., cut-based formulations \cite{FST07,F91,LWPKMF05,CKLM10}.
While the model size of \eqref{ilp:PB} is largely dependent on $|\PP|$, and thus on $\alpha$, \eqref{ilp:AB} is not. We can expect \eqref{ilp:PB}'s performance to degrade with increasing stretch factors, while \eqref{ilp:AB} should have similar performance across all stretches.

It is easy to see that every fractional LP-solution to \eqref{ilp:PB} can be mapped to a fractional \eqref{ilp:AB} solution with the same $x$-variable values (see, e.g., \cite{ahuja1993network}), establishing that \eqref{ilp:PB} is at least as strong as \eqref{ilp:AB}.
The inverse is not true in general as, without fixing unreachable variables, already an unweighted 4-cycle for $\alpha=2$ yields an LP-solution for \eqref{ilp:AB} with all $x_e=0.5$ (yielding a dual bound of $2$), while \eqref{ilp:PB} gives the optimum value of~$3$.
For $\alpha = 2$ on unweighted graphs, fixing unreachable variables always suffices to avoid such a discrepancy. However, even on unweighted instances with variable fixing, we observe for general $\alpha$:
\begin{theorem}
    The LP-relaxation of \eqref{ilp:AB} is strictly weaker than \eqref{ilp:PB} in general.
\end{theorem}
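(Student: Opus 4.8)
The plan is to leverage the inequality already established in the excerpt, namely that \eqref{ilp:PB} is at least as strong as \eqref{ilp:AB}. It therefore only remains to exhibit a single unweighted instance $G$ together with a stretch factor $\alpha$ for which, even after fixing all unreachable flow variables, the optimum of the LP-relaxation of \eqref{ilp:AB} is \emph{strictly} below that of \eqref{ilp:PB}. Since both are minimization problems, a strictly smaller optimum is a strictly worse (weaker) dual bound, which is exactly the claim, so the whole argument reduces to constructing and certifying one counterexample.

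The mechanism I would exploit is the structural difference between the two stretch encodings. Constraint \eqref{ilp:AB:stretch} only bounds the \emph{total} weighted flow of a commodity, i.e.\ the flow-weighted \emph{average} length of the paths in any decomposition of the unit $u$-$v$-flow. Hence a feasible fractional flow may place a small fraction on a path \emph{longer} than $\alpha\cdot d_G(u,v)$, as long as it is compensated by shorter ones. In \eqref{ilp:PB}, by contrast, every usable variable $y_P$ with $P\in\PP_{uv}$ corresponds to a path of length at most $\alpha\cdot d_G(u,v)$, cf.\ \eqref{PB:path}. The key subtlety is that fixing unreachable variables does \emph{not} neutralize this gap: it only zeroes an arc $(i,j)$ when \emph{no} within-budget $u$-$v$-path runs through it, i.e.\ when $d_G(u,i)+w_{\{i,j\}}+d_G(j,v)>\alpha\cdot d_G(u,v)$; it never forbids a flow that traverses an over-budget route assembled from arcs that each individually lie on some within-budget path. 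I would design $G$ so that the cheap fractional \eqref{ilp:AB} optimum genuinely uses such assembled over-budget routes, and so that the per-commodity nature of \eqref{ilp:AB:edge_arc} (which forces $x_e$ only above the single largest commodity flow on $e$, not the sum) lets the flow share load across edges cheaply.

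To certify the gap I would proceed in two steps. First, I would give an explicit feasible fractional solution to \eqref{ilp:AB}, namely a symmetric fractional multicommodity flow satisfying \eqref{ilp:AB:kirch}--\eqref{ilp:AB:outflow} that survives unreachable-variable fixing, and compute its objective value $a$; this certifies that the LP optimum of \eqref{ilp:AB} is at most $a$. Second, I would lower-bound the LP optimum of \eqref{ilp:PB} by some $b>a$, preferably via a feasible \eqref{ilp:PB} dual solution $(\sigma,\pi)$ satisfying \eqref{eq:dual} with objective value $b$ (weak duality), or, on an instance whose \eqref{ilp:PB} LP happens to be integral, via the combinatorial spanner optimum. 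Together these bounds give $a<b$ with the LP optimum of \eqref{ilp:AB} at most $a$ and that of \eqref{ilp:PB} at least $b$, i.e.\ strict weakness.

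The main obstacle is engineering the instance so that the gap \emph{survives} unreachable-variable fixing, which is precisely why the simplest candidates fail: on a plain cycle the interior arcs of the long detour are reachable only via (essentially) the entire detour, so as soon as $\alpha$ is large enough for those arcs to pass the fixing test, the detour itself becomes a legal within-budget path for \eqref{ilp:PB} as well, and the gap closes. The construction must therefore place each arc of the exploited over-budget route on its own short, within-budget $u$-$v$-path (so it survives fixing), while ensuring that \emph{no single} within-budget path realizes the whole route---a family of overlapping short paths whose pieces assemble into a long one. Producing such a gadget, and then discharging the \eqref{ilp:PB} lower bound (whose LP may itself be fractional, so that a genuine dual certificate rather than an integrality argument may be required), is the delicate part of the proof.
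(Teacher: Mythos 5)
Your proposal correctly reduces the theorem to exhibiting a single instance where, even after fixing unreachable variables, the LP optimum of \eqref{ilp:AB} lies strictly below that of \eqref{ilp:PB}, and it correctly diagnoses both the mechanism (constraint \eqref{ilp:AB:stretch} only bounds the flow-weighted \emph{average} path length, so fractional flow may ride over-budget routes) and the reason naive cycles fail (the over-budget route's interior arcs get fixed, or become legal \eqref{ilp:PB} paths, as soon as they survive fixing). However, there is a genuine gap: the proof's entire mathematical content \emph{is} the gadget, and you never construct it---you explicitly defer ``producing such a gadget, and then discharging the \eqref{ilp:PB} lower bound'' as the delicate part. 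A plan that ends where the difficulty begins is not a proof; as it stands, neither the value $a$, nor the value $b$, nor the instance realizing $a<b$ exists in your write-up.

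For comparison, the paper resolves exactly the desiderata you state with a concrete construction: take $\alpha=5$ and an unweighted $K_5$ in which every edge of a Hamilton cycle is subdivided once, yielding a cycle $C$ of length $10$. Every edge of $C$ lies on some within-budget $u$-$v$-path for every terminal pair (the graph has diameter at most $3$, while all budgets are at least $5$), so no flow variable is fixed---this is precisely your ``every arc on its own short path'' requirement---yet for a pair of degree-$4$ nodes not adjacent on $C$ only one arc of $C$ (length $4$) is within budget, while the other (length $6$) is not, which is your ``no single within-budget path realizes the whole route'' requirement. Setting $x_e=0.5$ on $C$ gives an \eqref{ilp:AB} solution of value $5$. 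Notably, the \eqref{ilp:PB} lower bound you feared might need a dual certificate is discharged by an elementary primal counting argument: constraints \eqref{PB:path} and \eqref{PB:edge} at each degree-$2$ node force the two incident $x$-values to sum to at least $1$; if the objective were at most $5$, these sums would all equal exactly $1$ and all chord variables would vanish, whence any path through a degree-$2$ node has $y_P\le 0.5$, and the unique within-budget $C$-path for a pair of degree-$4$ nodes cannot cover its constraint \eqref{PB:path} alone---a contradiction forcing the \eqref{ilp:PB} LP value above $5$. You would need to supply this construction (or an equivalent one) and both verifications to complete the proof.
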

\begin{proof}
    Let $\alpha=5$. Consider an unweighted $K_5$, where we subdivide all edges along a Hamilton cycle once. Let $C$ be the corresponding Hamilton cycle of length~$10$. 
    For any pair of nodes $u\neq v$, every edge $e\in C$ lies on a $u$-$v$-path of length at most $5$, and thus no flow variables can be fixed. 
    Setting $x_e=0.5$ for all $e\in C$, and $x_e=0$ otherwise, allows a fractional LP-solution for \eqref{ilp:AB} of weight $5$.
Consider any fractional solution to \eqref{ilp:PB}.
For every degree-2 node $u$, the variables of its incident edges add up to at least $1$ by \eqref{PB:path}, but also to at most $1$, if the objective value were to be at most~$5$.
Thus, for every path $P$ that uses both edges adjacent to some degree-2 node, we have $y_P\leq 0.5$.
But for any pair of degree-4 nodes, there is only one path along $C$ of length at most $5$, requiring us to either increase the variable values along $C$ or have non-zero variable values for edges in $E\setminus C$.
In either case, any LP-solution to \eqref{ilp:PB} has an objective value $>5$, proving the theorem.
\end{proof}

\section{Experiments} \label{sec:Experiments}
All our implementations are freely available and will be part of the next release of the open source C++ 
library \emph{Open Graph algorithms and Datastructures Framework}~\cite{ogdf_REAL}.
All instances and detailed data of all experiments are available at~\cite{tcs.uos_REAL}.
We use the Branch-and-Cut-and-Price framework SCIP 8.0.4~\cite{scip} with CPLEX 22.1.1~\cite{cplex} as the LP solver.
All experiments are performed on an Intel Xeon Gold 6134 with 256 GB RAM under Debian 10.2 using gcc 8.3.0-6 (\texttt{-O3}). We enforce a time limit of $30$ minutes and a RAM limit of 32 GB per instance.

We consider several exact solvers. The re-implementation of the original path-based method by Sigurd and Zachariasen~\cite{Sigurd2004} is denoted \algo[Orig]{PB}. 
\algo[Top]{PB} uses our improvements, were the most promising configuration was found during pilot studies: We metricize the graph, set $K=E$, and fix mandatory variables; the set of initial paths is created by \algo{\kSP[10]+BG}; we use \algo{\BiAstar{3}} and prune \problem{$\mu$-CSP} calls.
Furthermore, we consider our improved version of the arc-based approach, denoted \algo{\ABplus}. 

\mysubparagraph{Hypotheses.}
We formulate our central research questions as falsifiable hypotheses:
\begin{itemize}[leftmargin=8mm,topsep=0mm] 
    \item[\hypShort{1}.] Even with current hardware, \algo[Orig]{PB} is unable to solve significantly larger instances compared to what was reported in 2004.
    \item[\hypShort{2}.] Our speedup techniques used on \algo[Orig]{PB} are effective and allow \algo[Top]{PB} to solve instances orders of magnitude larger than previously possible.
    \item[\hypShort{3}.] \algo[Top]{PB} is faster and able to solve larger instances than \algo{\ABplus}.
    \begin{itemize}
        \item[\hypShort{3'}.] \algo[Top]{PB} yields fewer variables than \algo{\ABplus}.
    \end{itemize}
    \item[\hypShort{4}.] Even for larger instances, \algo{BG} produces spanners with near optimum weight.
\end{itemize}

\mysubparagraph{Generated Instances.}
We place $n \in N \coloneqq \{20,50,100,200,500,1000,2000\}$ nodes uniformly at random in a unit square; previously, only graphs with $n\leq100$ could be considered.
Edges are introduced in different ways to obtain different graph classes, following the literature on evaluating spanner algorithms.
In the generation processes below, we enforce relative densities $\varrho \in \Rho \coloneqq \left\{2 \ln(n)/n, 10\%, 50\%\right\}$ or constant average node degrees $\delta \in D \coloneqq \{4,8\}$.
\begin{description}
    \item [\instance{ER}.] \emph{Erdős-Rényi} graphs were previously used in~\cite{Sigurd2004, ahmed2019, chimani2022}. For each $n \in N$ and $\varrho \in \Rho$ ($\delta \in D$), each possible edge is included with uniform probability of $\varrho$ ($\delta/(n-1)$) to get the desired relative density (average node degree, respectively).
    \item [\instance{WM}.] \emph{Waxman} graphs were previously used~\cite{Sigurd2004}. They generalize \emph{random geometric graphs}~\cite{penrose2003} (where nodes are adjacent if their euclidean distance is below some threshold) 
    and originate from applications in broadcasting~\cite{waxman1988}. Each edge is included with probability $\gamma e^{-d/(\beta L)}$, where $d$ is the distance between its endnodes and $L$ is the maximum distance between any two nodes. We use similar parameters as~\cite{Sigurd2004}, but adapted for larger $n$: For each $n \in N$ and $\varrho \in \Rho$ ($\delta \in D$), we keep $\beta = 0.14$ fixed and vary $\gamma$ depending on $n$ to achieve the desired relative density (average node degree). See~\cite{tcs.uos_REAL} for specific $\gamma$ values.
    \item [\instance{CMP}.]\emph{Complete} graphs (with relative density $\varrho=100\%$), were previously used in~\cite{Sigurd2004,chimani2022}. We consider them for all $n \in N\setminus\{2000\}$.
\end{description}
For each instance, we consider three different edge weight types $\weightSet \coloneqq \{\weight{1}, \weight{euc}, \weight{n}\}$:
\weight{1} denotes uniform weights, i.e., unweighted graphs. 
\weight{euc} considers euclidean distances.
For \weight{n}, edge weights are drawn uniformly at random from $\{1,2, \dots n\}$.
We do not consider weights $1\pm 1/3$ (a further of possibility considered in~\cite{chimani2022}), as the results therein suggest the same behavior as \weight{1}.
While \weight{1} and \weight{euc} yield metric weights, \weight{n} does not in general.

We generate 10 \instance{ER} and \instance{WM} graphs for every combination in $N \times \weightSet \times (\Rho \cup D)$, respectively. Similarly, 10 \instance{CMP} graphs are created for every combination in $(N\setminus\{2000\}) \times \weightSet$.
Note that \instance{ER} are with high probability \emph{expander graphs}~\cite{hoory2006}, i.e., simultaneously sparse and highly connected, while \instance{WM} generally are not.

\mysubparagraph{Established Instances.} We also consider preexisting (often real-world) instances.
\begin{description}
    \item[\instance{SteinLib}.] The well-known graph library~\cite{steinlib}, originally created for the Steiner tree problem. As many applications of Steiner trees and spanners overlap, it was used in~\cite{chimani2022} to evaluate spanner heuristics. We consider the 1017 graphs with at least 100 nodes. On average, they have $1398$ nodes and $10\,045$ edges; $74\%$ of graphs have a relative density $\varrho<2.5\%$.

    \item[\instance{Road}.] We consider 10 undirected US \emph{road networks}~\cite{roadUSA} with $5\,000 < |V| < 17\,000$. Their average node degrees are $2.3$--$2.9$.
    We consider two different weights: \weight{len} and \weight{t} consider the length and travel time (i.e., length$/$speed-limit) of the road segments, respectively.

    \item[\instance{Bundling}.] A recent graph drawing paper on \emph{edge-path bundling}~\cite{wallinger2023} uses spanners (computed via \algo{BG}) as their central building block.
    We can now solve their \emph{Airlines} instance ($|V|=235$, $|E|=1297$) optimally for $\alpha \in \{1.2, 1.5, 2\}$ within $4$\,s, $70$\,s, and $25$\,min, respectively.
\end{description}
For all above instance sets (both generated and established), we consider stretch factors $\alpha \in \{1.2, 1.5, 2, 3, 5\}$. 
In the unweighted case, we disregard $\alpha<2$, since there $G$ itself is the only feasible solution.
We consider large stretch factors $3$ and $5$ mainly to get a broader picture. We stress that in practice, $\alpha$ should typically be assumed to be rather small: e.g. \cite{heinrich2023} considers detours in networks beyond $\alpha = 1.5$ to be typically too long; \cite{wallinger2023} are not interested in distortions beyond $\alpha = 2.5$ in graph drawings.

\mysubparagraph{\Hyp{1}.}
In 2004,~\cite{Sigurd2004} conducted experiments on a 933 MHz Intel Pentium III with a time limit of 30 minutes.
For any $\alpha\leq2$, the largest solved \instance{WM} and \instance{ER} graphs ($\delta \in \{4, 8\}$) have $n = 64$; \instance{CMP} graphs are solved for up to $n = 50$.
On our modern machine with $\alpha=2$, \algo[Orig]{PB} is still unable to solve \instance{WM} or \instance{ER} graphs with average node degree $\delta \in \{4, 8\}$ and $n>100$, and \instance{CMP} graphs with $n > 50$. Thus, we cannot reject \hypShort{1} and take it as confirmed.

\begin{table}[tb] 
    \centering
        \caption{Top: Share (\%) of optimally solved \instance{WM} graphs. Bold font marks the better of \algo[Top]{PB} and \algo{\ABplus}. Bottom: median gap %
        of \algo{BG} to the optimum solution value (or the best found dual bound; cf.\ top). A ``--'' indicates that there are instances in that class with no non-trivial lower bound.}\label{table:shareSolvedWM}
        \smaller
    \setlength\tabcolsep{4.3pt}
    \begin{tabular}{|l*{4}{|ccc}*{2}{|cc}|} %
     \tableHead
          \hline
          \hline

        \rowcolor{lightGray}
        \algo[Top]{PB} &  $\bm{100}$ & $\bm{100}$ &  $\bm{99}$ & $\bm{100}$ &  $\bm{100}$ & $\bm{90}$ &  $\bm{100}$ & $\bm{100}$ & $\bm{75}$ &  $\bm{62}$ &  $\bm{49}$ & $25$ & $41$ & $10$ & $19$ & $8$\\
          
        \algo[Orig]{PB} &  $49$ & $34$ &   $16$ & $43$ &  $27$ & $18$ & $39$ & $16$ & $9$ & $26$ & $14$ & $8$ & $20$ & $6$ & $12$ & $7$ \\

        \rowcolor{lightGray}
      \algo{\ABplus} &  $93$ & $93$ &  $86$ & $86$ &  $81$ & $66$ &  $93$ & $93$ & $74$ &  $57$ &  $46$ & $\bm{30}$ & $\bm{100}$ & $\bm{62}$ & $\bm{48}$ & $\bm{13}$\\

        \algo[noM]{PB} &  $100$ & $100$ &  $98$ & $76$ &  $76$ & $62$ &  $100$ & $100$ & $74$ &  $62$ &  $49$ & $24$ & $41$ & $10$ & $19$ & $8$\\

        \rowcolor{lightGray}
        \algo[noE]{PB} &  $64$ & $57$ &  $44$ & $57$ &  $57$ & $40$ &  $39$ & $19$ & $12$ &  $37$ &  $17$ & $13$ & $30$ & $9$ & $15$ & $5$\\
    \hline
    \hline
    $n$ & \multicolumn{16}{c|}{Median gap of \algo{BG} to best lower bound in \% }\\
    \hline
    \hline
    \rowcolor{lightGray}
    $100$ & $0.0$ & $2.1$ & $5.0$ & $0.0$ & $2.3$ & $4.6$ & $0.5$ & $4.0$ & $6.1$ & $2.8$ & $8.6$ & $8.0$ & $5.4$ & $31.1$ & $19.8$ & $129.1$\\

    $500$ & $0.0$ & $1.3$ & $3.5$ & $0.1$ & $2.0$ & $4.9$ & $0.1$ & $3.2$ & $5.1$ & $1.4$ & $6.3$ & -- & $0.6$ & $7.3$ & $4.7$ & $90.2$\\

    \rowcolor{lightGray}
    $2000$ & $0.0$ & $0.6$ & $3.2$ & $0.0$ & $1.1$ & $3.6$ & $0.0$ & $1.1$ & $2.3$ & $0.7$ & -- & -- & $0.1$ & $1.3$ & $3.6$ & $56.4$ \\
    \hline
    \end{tabular}
\end{table}

\mysubparagraph{\Hyp{2}.}
We investigate the effect of our speedup techniques used in \algo[Top]{PB}. 
\algo[Top]{PB} is able to solve many more instances than \algo[Orig]{PB}: in particular, while the former still has very high success rates for \instance{Road} and \instance{SteinLib} (\Cref{fig:RD_W_t_time,fig:STL}), the latter solves no such instance at all. On the generated instances, the picture is analogous (see, e.g., \Cref{table:shareSolvedWM,table:medianRunningTime}): while \algo[Orig]{PB} solves only the smallest instances as discussed above, \algo[Top]{PB} typically also solves the largest instances ($n=2000$) for stretches $\alpha \leq 2$. 
As expected, instances become harder for the \eqref{ilp:PB}-approaches with increasing $\alpha$, as the variable set increases. But even for $\alpha=3$, \algo[Top]{PB} solves $70\%$ of the \weight{n}-weighted complete graphs on $1000$ nodes.
We also observe that unweighted instances are typically much more challenging than weighted instances.

We are thus interested, which of our speedup techniques enables these high success rates. Therefore,
we consider variants of \algo[Top]{PB}, where individual features are turned off (they are otherwise identical to \algo[Top]{PB}):
\algo[noM]{PB} does not use metrication, \algo[noE]{PB} uses $K=\binom{V}{2}$, \algo[noFix]{PB} does not fix mandatory variables, \algo[simpleInit]{PB} uses \algo{\kSP[1]} initialization, \algo[noPrune]{PB} does not prune \problem{$\mu$-CSP} calls,
\algo[noBiA$^*$]{PB} uses the \algo{BasicCSP} algorithm, and \algo[\BiAstar{1}]{PB} uses \algo{\BiAstar{1}}.
For brevity, we will not discuss both \instance{ER} and \instance{WM} individually, as their behaviors are very similar. Seemingly, \instance{WM} graphs are slightly more challenging to solve than \instance{ER} (\Cref{appendix:H2:table}). We define the \emph{success rate} as the percentage of instances solved to proven optimality within the time and RAM limit.
We start with considering \Cref{table:shareSolvedWM}.

\newcommand{\minipara}[1]{\textbf{#1}}
\minipara{Metrication and terminal pairs.} 
For all \weight{n}-weighted \instance{ER}, \instance{WM}, and \instance{CMP} graphs, the average remaining density (average node degree) after metrication is in $[4.0\%, 9.2\%]$ ($[3.7, 7.9]$, respectively). Thus, after metrication, all \weight{n}-weighted graphs are sparse.
On originally (nearly) metric instances, \algo[Top]{PB} and \algo[noM]{PB} show similar performances; on dense \weight{n}-weighted graphs with $\varrho \in \Rho \cup \{100\%\}$, \algo[Top]{PB} is clearly superior.
\Cref{table:shareSolvedWM} also shows that \algo[noE]{PB} is far inferior to \algo[Top]{PB}, a testament to the benefit of restricting $K$.

\begin{table}[tb] 
    \centering
        \caption{Average over the speedup factors of \algo[Top]{PB} relative to the specified algorithm, on \instance{WM} instances with $n\geq50$  solved by both solvers. Factors below $1$ indicate that \algo[Top]{PB} is slower.}\label{table:speedupFactorWM}
        \smaller
    \setlength\tabcolsep{3pt}
    \begin{tabular}{|l*{4}{|ccc}*{2}{|cc}|} %
     \tableHead

     \hline
        \hline

          \rowcolor{lightGray}
        \algo[noFix]{PB} &  $1.85$ & $1.85$ & $2.05$ & $1.96$ & $1.79$ & $1.68$ & $1.84$ & $2.14$ & $1.94$ & $1.73$ &  $1.22$ & $1.14$ & $2.91$ & $1.79$ &  $1.52$ & $1.00$ \\

        \algo[simpleInit]{PB} &  $0.83$ & $1.18$ &  $1.43$ & $0.97$ &  $1.53$ & $1.38$ &  $0.91$ & $1.63$ & $1.35$ &  $1.69$ &  $2.93$ & $1.57$ & $0.98$ & $1.30$ & $0.94$ & $1.28$\\

        \rowcolor{lightGray}
        \algo[noPrune]{PB} & $1.81$ & $2.13$ & $2.62$ & $2.00$ & $2.38$ & $2.33$ & $1.90$ & $2.80$ & $2.38$ & $2.14$ &  $1.65$ & $1.15$ & $3.10$ & $1.65$ &  $1.70$ & $1.40$ \\

      \algo[noBiA$^*$]{PB} & $0.98$ & $1.07$ & $1.96$ & $1.03$ & $1.12$ & $5.24$ & $0.97$ & $1.72$ & $2.15$ & $1.20$ &  $4.07$ & $2.45$ & $1.36$ & $1.20$ &  $1.27$ & $0.93$ \\

        \rowcolor{lightGray}
      \algo[\BiAstar{1}]{PB} & $1.00$ & $1.05$ & $1.05$ & $1.04$ & $1.01$ & $1.09$ & $1.00$ & $1.01$ & $0.85$ & $1.03$ &  $0.77$ & $0.66$ & $1.00$ & $1.01$ &  $1.01$ & $1.26$ \\

          \hline
          \hline
      \algo{\ABplus} & $13.0$ & $18.9$ & $8.07$ & $29.9$ & $34.8$ & $11.7$ & $13.6$ & $8.98$ & $2.01$ & $11.7$ & $1.72$ & $0.18$ & $0.03$ & $0.01$ & $0.05$ & $0.01$\\

          \hline
    \end{tabular}
    
\end{table}

In the context of \Cref{table:shareSolvedWM}, the success rates for the other individually deactivated speedup techniques never deviate more than $5$ ($\alpha\leq 2$) or $12$ ($\alpha>2$) percentage points from \algo[Top]{PB}.
Still, \algo[Top]{PB} achieves significant speedups against them, as shown in \Cref{table:speedupFactorWM}:

\minipara{Fixing variables.} 
As fixing mandatory variables generates no overhead, \algo[noFix]{PB} is never faster than \algo[Top]{PB}. Overall, it typically allows 1.5--2 times faster computations, in particular on sparse graphs.
The benefit decreases with increasing density or $\alpha$, as the share of mandatory path variables drops; for $\alpha=5$ almost no variables can be fixed.

\minipara{Initial path variables.}
The speedup of the proposed initialization seems to largely depend on $\alpha$. 
Clearly, the quality difference between the $1$-spanner induced by the \algo{\kSP[1]} initialization to the $\alpha$-spanner yielded by \algo{BG} grows with $\alpha$. Further, for larger $\alpha$, \algo{\kSP[10]} is able to provide more paths. 
For $\alpha \leq 2$, there are some cases where the more elaborate initialization of \algo[Top]{PB} increases the overall running time, presumably due to the fact that there are only very few feasible paths. However, for the majority of the considered classes, the \algo{\kSP[k]+BG} initialization is worthwhile and even yields close to three times faster computations than \algo[simpleInit]{PB} on dense \weight{euc}-weighted \instance{WM} graphs for $\alpha=2$.

\minipara{Pruning \problem{$\mu$-CSP} calls.}
On optimally solved instances with weights \weight{n}, \weight{euc}, and \weight{1}, we can prune $90$\%, $86$\%, and $63$\% \problem{$\mu$-CSP} calls, respectively, in the median. This yields speedup factors of up to $3.1$ compared to \algo[noPrune]{PB}. 
The share of pruned calls slightly drops with increasing $\alpha$. Surprisingly, this does not always translate into smaller speedups.

\minipara{\algo{\BiAstar{\mu}} pricing.}
Especially for $\alpha=1.2$, the speedup against \algo[noBiA$^*$]{PB} is surprisingly low and sometimes even slightly below $1$. Reasons may be that
\algo[\BiAstar{1}]{PB} has a higher setup cost than \algo{BasicCSP}, which cannot be recuperated if most necessary paths are already added during initialization,
and most \problem{$\mu$-CSP} calls are either pruned or quickly detected to be infeasible.
Still, for larger stretches, using only \algo{BasicCSP} instead of \algo{\BiAstar{3}} can lead to up to 5-fold running times.
---
On most instances, the latter also slightly outperforms pricing via \algo{\BiAstar{1}}, especially on instances that are sparse (after metrication) or for stretches $\alpha \leq2$.
\algo[\BiAstar{1}]{PB}, however, seemingly has advantages for large (practically less relevant) stretches and (even after metrication) dense graphs.
A possible explanation is based on the following observation on \algo[Top]{PB}:
A majority of the dual solution values $\pi$ (and thus edge costs in the \problem{CSP} instance) are $0$; thus in many cases the minimum cost feasible path is \emph{free}, i.e., has total cost $0$.
On optimally solved instances, the median share of added free paths for  \weight{n}-, \weight{euc}-, and \weight{1}-weighted graphs are $95$\%, $98$\%, and $83$\%, respectively; their share grows with $\alpha$ and density.
If there is a free path, there can be no other non-dominated solution for this \problem{$\mu$-CSP} call, and values $\mu>1$ only inflict overhead.

The speedup of \algo[Top]{PB} showcased against \algo[noM]{PB}, \algo[noE]{PB}, \algo[noFix]{PB}, \algo[simpleInit]{PB}, \algo[noPrune]{PB}, and \algo[noBiA$^*$]{PB} give reason to not reject \hypShort{2}.
The decision between \algo{\BiAstar{1}} or \algo{\BiAstar{3}} is less clear: it seems non-crucial and instance-property dependent.
We choose \algo{\BiAstar{3}} in \algo[Top]{PB}, as it performs slightly better on more practically relevant instances.
With no reason for rejection, we consider \hypShort{2} confirmed.

\mysubparagraph{\Hyp{3}.}
We compare \algo[Top]{PB} and \algo{\ABplus}. 
Our experiments show a clear divide between (non-uniformly) weighted and unweighted (\weight{1}) graphs, so we discuss them separately.

\minipara{Weighted graphs.}
\Cref{table:shareSolvedWM,table:speedupFactorWM} show the success rates and relative speeds on \instance{WM} graphs (again, their behavior on \instance{ER} graphs is analogous): \algo[Top]{PB} can solve significantly more instances than \algo{\ABplus} and achieves significant speedups, especially on graphs that are sparse (after metrication) and for stretches $\alpha \leq2$. 
For \weight{euc}, the advantage of \algo[Top]{PB} seems to shrink with increasing $\alpha$.
For $\alpha=5$, \algo{\ABplus} is able to solve more instances than \algo[Top]{PB} regardless of weight or density; however, 
neither algorithm can solve such instances with $n>200$.
\Cref{table:medianRunningTime} shows the median running times for different graph sizes and edge weights on \instance{WM} and \instance{CMP} graphs for $\alpha=1.5$.
Typically, \algo[Top]{PB} is orders of magnitudes faster, and always requires less than 1 minute for any graph that is sparse (after metrication).
\algo{\ABplus} is only beneficial on dense ($\varrho=10\%$) \weight{euc}-weighted \instance{WM} graphs; however, on those graphs, again neither algorithm can go beyond 200 nodes.

\begin{table}[tb] 
    \centering
        \caption{Median running time (in sec) for \instance{WM} (specified by $\delta$ or $\varrho$) and \instance{CMP} graphs with $\alpha=1.5$. 
        ``--'' indicate success rates below $100\%$: $30\%$ for \algo[Top]{PB}, and $0\%$ for \algo{\ABplus}, respectively}\label{table:medianRunningTime}
        \smaller
    \setlength\tabcolsep{3.4pt}
    \begin{tabular}{|l*{2}{|rr|rr|rr|rr|rr|rr|rr}|} %
      \hline
           weight &  \multicolumn{8}{c|}{\weight{n}} &  \multicolumn{6}{c|}{\weight{euc}} \\
          \hline
        density &  \multicolumn{2}{c|}{$\delta=4$} &  \multicolumn{2}{c|}{$\delta=8$} &  \multicolumn{2}{c|}{$\varrho=10\%$} &  \multicolumn{2}{c|}{\instance{CMP}} &  \multicolumn{2}{c|}{$\delta=4$} & \multicolumn{2}{c|}{$\delta=8$} & \multicolumn{2}{c|}{$\varrho=10\%$} \\
          \hline
        $n$ & $1000$ & $2000$ & $1000$ & $2000$ & $1000$ & $2000$ & $500$ & $1000$ & 
        $1000$ & $2000$ & $1000$ & $2000$ & $100$ & $200$ \\
        \hline\hline
          \rowcolor{lightGray}
        \algo[Top]{PB} &  $2.6$ & $11.4$ & $5.7$ & $31.6$ & $7.7$ & $45.6$ & $2.9$ &$15.6$ &
         $2.9$ & $11.6$ & $9.2$ & $45.2$ & $0.5$ & --\\
          \hline          
      \algo{\ABplus} & $35.9$ & $172.9$ & $672.0$ & -- & $1258.2$ & -- &  $328.7$ & -- & $44.8$ & $225.8$ & $818.9$ & -- & $3.7$ & $258.9$\\     
     \hline
    \end{tabular}
\end{table}

\Cref{fig:ER_euc_ln_time} shows the median running time of \algo[Top]{PB} and \algo{\ABplus} for \weight{euc}-weighted \instance{ER} graphs with $\varrho = 2\ln (n)/n$, for varying sizes and stretches. Running times of non-solved instances are interpreted as $\infty$;
consequently, we only show data for success rates above $50\%$.
For $\alpha=5$, no algorithm can reliably solve graphs with more than 20 nodes.
For all other $\alpha$, \algo[Top]{PB} solves significantly larger instances than \algo{\ABplus}.
For $\alpha\leq2$, \algo[Top]{PB} reliably solves instances with 2000 nodes, while \algo{\ABplus} can only solve instances with $n\leq 500$.
For all $\alpha\leq 2$, median running times of \algo[Top]{PB} are orders of magnitude lower than those of \algo{\ABplus}.

\Cref{fig:RD_W_t_time} shows the running times for the real-world \weight{t}-weighted \instance{Road} instances; results for \weight{len} are similar.
For $\alpha \leq 2$, \algo[Top]{PB} solves all instances optimally, even the one with $16\,874$ nodes, and is orders of magnitudes faster than \algo{\ABplus}.
In contrast to \algo[Top]{PB}, \algo{\ABplus}---which has similar running times for all $\alpha\leq3$---is able to solve some small instances for $\alpha=3$.
Neither solved any \instance{Road} graph for $\alpha=5$.
Consider the established \instance{SteinLib}, see \Cref{fig:STL} for $\alpha=1.5$.  
Generally, most groups behave similarly to sparse \weight{n}- or \weight{euc}-weighted graphs.
Except for \emph{Group ST} instances, \algo[Top]{PB} performs better than \algo{\ABplus} and never solves fewer instances.
Interestingly, the \emph{Cross-Grid} graphs are unweighted but---in contrast to most unweighted graphs, discussed below---\algo[Top]{PB} solves them for $\alpha=1.5$ over 10 time \emph{faster} than \algo{\ABplus}.

\begin{figure}
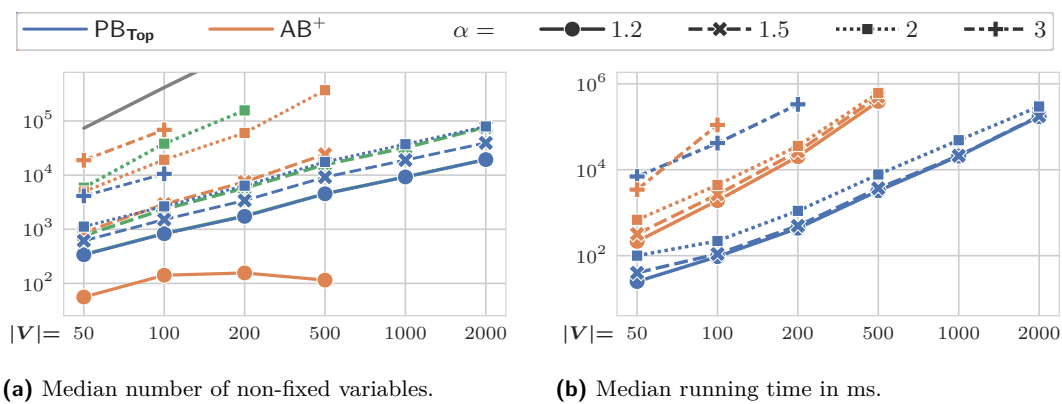

	\centering
\input{figures/legend.pgf}
\begin{subfigure}[t]{.48\textwidth}
\input{figures/ER_euc_ln_numVars.pgf}
	\caption{Median number of non-fixed variables.}
  \label{fig:ER_euc_ln_numVars}
	\end{subfigure}
 \hfill
\begin{subfigure}[t]{.48\textwidth}
\input{figures/ER_euc_ln_time.pgf}
  \caption{Median running time in ms.}
  \label{fig:ER_euc_ln_time}
 \end{subfigure}
	\caption{Comparing \algo[Top]{PB} and \algo{\ABplus} on \weight{euc}-weighted \instance{ER} graphs with $\varrho = 2 \ln(n)/n$. In \textsf{\textbf{(a)}}, the gray line represents the number of variables in the model \eqref{ilp:AB} (without fixing); the green lines (dependent on $\alpha$) give the upper bound \algo{PB-UB} of non-fixed variables for \algo[Top]{PB}, if one would consider all path variables in the full ILP (\ref{ilp:PB}). For $\alpha=1.2$, \algo[Top]{PB} attains this upper bound.
 \label{fig:with_legend}}
\end{figure}
\minipara{Unweighted graphs.}
As witnessed for \instance{WM} in \Cref{table:shareSolvedWM,table:medianRunningTime} (and analogously for \instance{ER}), unweighted instances are generally much harder for both \algo{\ABplus} and \algo[Top]{PB}, and we can solve only fewer and smaller graphs. On those graphs, however, \algo{\ABplus} typically significantly outperforms \algo[Top]{PB}.
\algo[Top]{PB} cannot solve any generated instance with $n\geq 100$ for $\varrho \in \Rho\cup\{100\%\}$. On unweighted instances, $\alpha$ needs to be integer, but only few instances were solved by either algorithm for $\alpha\geq3$. So we concentrate on $\alpha=2$ in the following:
Let \algo{PB-UB} denote the number of variables in the full \eqref{ilp:PB}-model, but after fixing variables. This yields an upper bound on \algo[Top]{PB}'s number of variables (which would, e.g., also be attained by the brute-force initialization discussed in~\Cref{sec:AE}).
We observe that \algo[Top]{PB} generates close to \algo{PB-UB} many variables, whereas \algo{\ABplus}, thanks to fixing unreachable and mandatory variables, requires only comparably few. The former seems to struggle identifying the best of the large set of paths with identical length.
Furthermore, we observe that SCIP's default cut generators successfully separate several additional constraints to the \eqref{ilp:AB}-model, whereas they find none for the \eqref{ilp:PB}-model. This leads to the effect that, for all densities except $\varrho\in\{50\%,100\%\}$, \algo{\ABplus} solves the majority of instances at the root B\&B node. \algo[Top]{PB} requires a median number of $36\,000$ B\&B nodes already for $\delta=8$ and $n=50$.

Overall, we have to reject \hyp{3} on unweighted instances, very dense \weight{euc}-weighted graphs, and large (arguably less practically relevant) stretches $\alpha\geq 5$.
For all other instances, we cannot reject \hypShort{3}, as \algo[Top]{PB} is almost always significantly faster and able to solve larger instances than \algo{\ABplus}.
The hypothesis looks particularly strong for stretches $\alpha\leq 2$.

\mysubparagraph{\Hyp{3'}.}
The rationale for this hypothesis is that in the aforementioned cases where cut-based ILPs dominate arc-based MCF formulations, the former typically yield practically smaller models. Thus, it seems natural that \algo[Top]{PB}'s strength could be due to a smaller set of variables, compared to the $\mathcal{O}(n^4)$ of \eqref{ilp:AB}. However, the picture seems much more complicated here:
As mentioned above, on unweighted graphs (with $\alpha=2$ as the only statistically significant case) the LPs of \algo{\ABplus} hold fewer variables than \algo[Top]{PB}. Consider (non-uniformly) weighted graphs:
As a representative example, \Cref{fig:ER_euc_ln_numVars} shows the median number of non-fixed variables per algorithm, whenever all respective \weight{euc}-weighted \instance{ER} instances with $\varrho=2\ln(n)/n$ are solved; similar results hold for the other non-uniformly weighted generated instances.
For both algorithms, this number grows with $\alpha$.
However, for \algo{\ABplus} and $\alpha\leq2$, these differences in model size are \emph{not} reflected in the eventual running times, see \Cref{fig:ER_euc_ln_time}.
In fact, for $\alpha=1.2$, \algo{\ABplus} yields the smallest median model size, but its median running time is still orders of magnitudes larger than \algo[Top]{PB}'s.
\algo{\ABplus} spends a significant amount of time fixing variables, which dominates the running time for $\alpha < 2$.
However, only for $\alpha=1.2$, it yields fewer unfixed variables than \algo[Top]{PB}; otherwise \algo{\ABplus} typically requires significantly more variables.
Sometimes, it even yields models larger than \algo{PB-UB}.
Consistently, on the very sparse \instance{Road} instances (\Cref{fig:RD_W_t_numVars}), \algo[Top]{PB}'s variable number is significantly larger than \algo{\ABplus} (and surprisingly independent of $\alpha$ for all $\alpha\leq 2$).
Also on \instance{SteinLib}, \algo{\ABplus} typically requires less variables than \algo[Top]{PB}.
In any case, \algo{\ABplus}'s strive for a small variable set is typically too expensive to attain competitive running times---in particular since the reduction is very successful only for small $\alpha$, where \algo[Top]{PB} is still faster.

Overall, while there are several cases where \algo[Top]{PB} requires less variables than \algo{\ABplus}, we overall reject \hypShort{3'} in its generality. Certainly, it cannot be used as the hypothesized simple explanation for \algo[Top]{PB}'s strong practical performance. Its performance benefit over \algo{\ABplus} seems to be rooted in the latter's drawback: \eqref{ilp:AB}'s model size, and the very time-consuming preprocessing required to mitigate its effect (if successful at all).

\begin{figure}
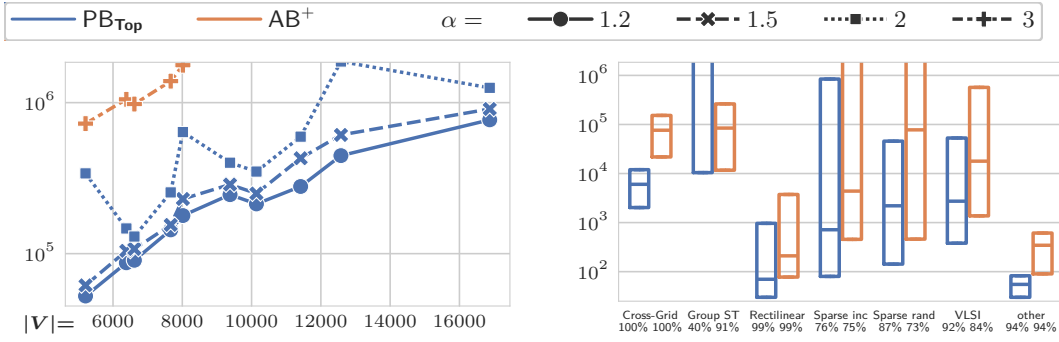

\input{figures/legend.pgf}
\begin{subfigure}[t]{.48\textwidth}\centering
   \input{figures/RD_W_t_time.pgf}
 \caption{\instance{Road} instances with \weight{t} (i.e., travel times). \algo{\ABplus} is visually indentical for all $\alpha\in\{1.2,1.5,2,3\}$.}
   \label{fig:RD_W_t_time}
 \end{subfigure}
 \hfill
\begin{subfigure}[t]{.48\textwidth}\centering
   \input{figures/STL.pgf}
 \caption{\instance{SteinLib}, grouped by type, for $\alpha=1.5$.}
   \label{fig:STL}
 \end{subfigure}
   \caption{Running time (in ms; always on the vertical log-scale axis) of \algo[Top]{PB} and \algo{\ABplus}. The legend of \Cref{fig:with_legend} applies. On the horizontal axis of (b), we give the corresponding success rates. Observe the time limit of $1.8 \cdot 10^6$ms.} %
   \label{fig:rt_specialgraphs}
\end{figure}

\mysubparagraph{\Hyp{4}.}\label{para:H4} The lower part of \Cref{table:shareSolvedWM} lists the median gap (\%) of \algo{BG} to the optimum solution (or best lower bound), for \instance{WM} graphs with varying $n\in N$. Results are more significant for high success rates. %
Generally, median gaps grow with $\alpha$ and are significantly larger on unweighted graphs.
In~\cite{Sigurd2004}, they observed decreasing gaps for increasing $n \leq 64$.
Interestingly, our experiments on significantly larger graphs show that both the median gap and even the corresponding interquartile range decreases with increasing $n$ (\Cref{appendix:H4:table1bottom}).

For $\alpha\in\{1.2, 1.5, 2\}$, the gaps on \emph{Airlines}~\cite{wallinger2023} are $0.8\%$, $5.2\%$, and $5.8\%$.
On both \weight{t}- and \weight{len}-weighted \instance{Road} graphs, the median gaps are $0.0\%$, $0.1\%$, and $0.5\%$, respectively (\Cref{appendix:H4:tableSpecial}).
On all \instance{SteinLib} groups in \Cref{fig:STL}, except for \emph{other}, \algo{BG} yields a median gap of $0.0\%$ for $\alpha \in \{1.2, 1.5\}$.
For $\alpha=2$, the gap is below $0.9\%$ on all groups except \emph{VLSI} ($15.7\%$) and unweighted \emph{Cross-Grid} ($84.8\%$).
Overall, we consider \hypShort{4} confirmed for weighted graphs.

\mysubparagraph{Summary.}
We improved both previously known exact MWSP algorithms;
in particular, our speedup techniques for the path-based column generation approach enable us to solve instances orders of magnitude larger than previous attempts.
On most instances, this approach is superior to the arc-based model. Exceptions are instances which are structurally challenging for both approaches (but, according to literature, less practically relevant): unweighted graphs and instances with very large stretch factors $\alpha \geq 5$. Lastly, we used the newly found lower bounds to evaluate the quality of the strongest and most prominent \emph{basic greedy} heuristic: Even for large instances, it produces near-optimal spanners, with the quality surprisingly improving with instance size. Unsurprisingly, it is also challenged by unweighted instances, as its strategy to consider edges in order of increasing weight is ineffective.

In the future, one may try to further improve the \algo{\BiAstar{\mu}} algorithm by exploiting the many $0$-cost edges, or by identifying further strengthening constraint classes.

\newpage
\bibliography{bibliography.bib}

\newpage

\appendix
\renewcommand\thefigure{A\arabic{figure}}    
\setcounter{figure}{0} 
\renewcommand{\thetable}{A\arabic{table}}
\setcounter{table}{0}
\section{APPENDIX}
\FloatBarrier

\centering
        \captionof{table}{Top: Share (\%) of optimally solved \instance{ER} graphs. Bold marks the better of \algo[Top]{PB} and \algo{\ABplus}. ``--'' indicate these instances were only investigated during pilot studies.}
        \label{appendix:H2:table}
        \smaller
    \setlength\tabcolsep{4.3pt}
    \begin{tabular}{|l*{4}{|ccc}*{2}{|cc}|} %
     \tableHead
          \hline
          \hline

        \rowcolor{lightGray}
        \algo[Top]{PB} &  $\bm{100} $ & $ \bm{100} $ & $ \bm{97} $ & $ \bm{100} $ & $ \bm{100} $ & $ \bm{93} $ & $ \bm{100} $ & $ \bm{100} $ & $ \bm{93} $ & $ \bm{70} $ & $ \bm{59} $ & $ \bm{39} $ & $ 59 $ & $ 15 $ & $ 18 $ & $ 9$\\
          
        \algo[Orig]{PB}  & $49 $ & $ 34 $ & $ 19 $ & $ 43 $ & $ 29 $ & $ 15 $ & $ 44 $ & $ 34 $ & $ 16 $ & $ 39 $ & $ 24 $ & $ 12 $ & $ 21 $ & $ 13 $ & $ 12 $ & $ 5$  \\

        \rowcolor{lightGray}
      \algo{\ABplus} &  $93 $ & $ 93 $ & $ 86 $ & $ 84 $ & $ 80 $ & $ 70 $ & $ \bm{100} $ & $ 93 $ & $ 90 $ & $ 57 $ & $ 51 $ & $ 33 $ & $ \bm{100} $ & $ \bm{71} $ & $ \bm{48} $ & $ \bm{14}$\\
          
    \hline
    
    \end{tabular}

\vspace{2em}

\centering
   \input{figures/RD_W_t_numVars.pgf}
 \captionof{figure}{Number of variables of \algo[Top]{PB} and \algo{\ABplus} on \instance{Road} instances with \weight{t} (i.e., travel times).}
   \label{fig:RD_W_t_numVars}

    \centering
        \captionof{table}{Corresponding to \Cref{table:shareSolvedWM}. Top and middle repeat information. Top: Share (\%) of optimally solved \instance{WM} graphs. Bold font marks the better of \algo[Top]{PB} and \algo{\ABplus}. Middle: median gap of \algo{BG} to the optimum solution value (or the best found dual bound; cf.\ top). 
        Bottom: interquartile range of the gap. A ``--'' indicates that there are instances in that class with no non-trivial lower bound.}
        \label{appendix:H4:table1bottom}
        \smaller
    \setlength\tabcolsep{4.3pt}
    \begin{tabular}{|l*{4}{|ccc}*{2}{|cc}|} %
     \tableHead
          \hline
          \hline

        \rowcolor{lightGray}
        \algo[Top]{PB} &  $\bm{100}$ & $\bm{100}$ &  $\bm{99}$ & $\bm{100}$ &  $\bm{100}$ & $\bm{90}$ &  $\bm{100}$ & $\bm{100}$ & $\bm{75}$ &  $\bm{62}$ &  $\bm{49}$ & $25$ & $41$ & $10$ & $19$ & $8$\\
          
      \algo{\ABplus} &  $93$ & $93$ &  $86$ & $86$ &  $81$ & $66$ &  $93$ & $93$ & $74$ &  $57$ &  $46$ & $\bm{30}$ & $\bm{100}$ & $\bm{62}$ & $\bm{48}$ & $\bm{13}$\\

    \hline
    \hline
    $n$ & \multicolumn{16}{c|}{Median gap of \algo{BG} to best lower bound in \% }\\
    \hline
    \hline
    \rowcolor{lightGray}
    $100$ & $0.0$ & $2.1$ & $5.0$ & $0.0$ & $2.3$ & $4.6$ & $0.5$ & $4.0$ & $6.1$ & $2.8$ & $8.6$ & $8.0$ & $5.4$ & $31.1$ & $19.8$ & $129.1$\\

    $500$ & $0.0$ & $1.3$ & $3.5$ & $0.1$ & $2.0$ & $4.9$ & $0.1$ & $3.2$ & $5.1$ & $1.4$ & $6.3$ & -- & $0.6$ & $7.3$ & $4.7$ & $90.2$\\

    \rowcolor{lightGray}
    $2000$ & $0.0$ & $0.6$ & $3.2$ & $0.0$ & $1.1$ & $3.6$ & $0.0$ & $1.1$ & $2.3$ & $0.7$ & -- & -- & $0.1$ & $1.3$ & $3.6$ & $56.4$ \\
    \hline
    \hline

    $n$ & \multicolumn{16}{c|}{Interquartile range of the gap of \algo{BG} to best lower bound in \% }\\
    \hline
    \hline
    \rowcolor{lightGray}
    $100$ & $ 0.0 $ & $ 2.8 $ & $ 4.2 $ & $ 0.6 $ & $ 3.1 $ & $ 4.4 $ & $ 1.4 $ & $ 4.6 $ & $ 3.4 $ & $ 2.8 $ & $ 3.5 $ & $ 2.3 $ & $ 10.1 $ & $ 65.2 $ & $ 4.9 $ & $ 24.6$ \\

    $500$ & $0.1 $ & $ 1.1 $ & $ 1.1 $ & $ 0.1 $ & $ 1.2 $ & $ 1.9 $ & $ 0.6 $ & $ 2.8 $ & $ 3.7 $ & $ 0.3 $ & $ 1.8 $ & -- & $ 0.9 $ & $ 17.6 $ & $ 0.5 $ & $ 13.3$\\

    \rowcolor{lightGray}
    $2000$ & $0.0 $ & $ 0.9 $ & $ 1.0 $ & $ 0.0 $ & $ 0.4 $ & $ 0.6 $ & $ 0.1 $ & $ 1.4 $ & $ 0.2 $ & $ 0.2 $ & -- & -- & $ 0.1 $ & $ 2.6 $ & $ 0.2 $ & $ 2.7$ \\
    \hline

    \end{tabular}

\vspace{2em}
    \centering
        \captionof{table}{Median gap of \algo{BG} to the optimum solution value (or the best found dual bound).}
        \label{appendix:H4:tableSpecial}
        \smaller
    \setlength\tabcolsep{3.4pt}
    \begin{tabular}{|c|c|c|c|c|c|c|c|c|c|} 
      \hline
                 & \multicolumn{7}{c|}{\instance{SteinLib}} & \multicolumn{2}{c|}{\instance{Road}} \\
         \hline
        $\alpha$ & Cross-Grid                                & Group ST                              & Rectilinear & VLSI   & Sparse inc & Sparse rand & other & ~~~~\weight{t}~~~~ & ~~~\weight{len}~~~ \\
        \hline\hline
          \rowcolor{lightGray}
        $1.2$    & $0.0$                                     & $0.0$                                 & $0.0$       & $0.0$  & $0.0$       & $0.0$         & $0.0$ & $0.03$     & $0.01$       \\
        $1.5$    & $0.0$                                     & $0.0$                                 & $0.0$       & $0.0$  & $0.0$       & $0.0$         & $0.6$ & $0.11$     & $0.06$       \\
          \rowcolor{lightGray}
        $2$      & $84.8$                                    & $0.0$                                 & $0.9$       & $15.7$ & $0.3$       & $0.5$         & $4.6$ & $0.50$     & $0.51$       \\
     \hline
    \end{tabular}

\end{document}